\documentclass[journal]{IEEEtran}
\PassOptionsToPackage{ruled, linesnumbered}{algorithm2e}
\usepackage[latin9]{inputenc}
\usepackage{color}
\usepackage{multirow}
\usepackage{array}
\usepackage{algorithm2e}
\usepackage{amsmath,amsfonts}
\usepackage{amsthm}
\usepackage{amssymb}
\usepackage[caption=false,font=normalsize,labelfont=sf,textfont=sf]{subfig}
\usepackage{graphicx,verbatim}
\usepackage{cite,textcomp}
\usepackage{setspace}
\allowdisplaybreaks 
\makeatletter
\theoremstyle{plain}
\newtheorem{thm}{\protect\theoremname}
\newtheorem{remark}{Remark}

\usepackage{balance}
\usepackage{pgfplots}
\usepackage{pgfplotstable}
\usepackage{tikz}
\usetikzlibrary{calc}
\usetikzlibrary{arrows.meta}
\usetikzlibrary{patterns}

\DeclareMathOperator{\argmax}{argmax}
\usepackage{comment}

\makeatother

\providecommand{\theoremname}{Theorem}

\begin{document}
\title{\LARGE{Multicasting Pinching Antenna Systems With LoS Blockage}}
\author{Muhammad Fainan Hanif, and Yuanwei Liu,~\IEEEmembership{Fellow,~IEEE}
\thanks{M. F. Hanif is with the Institute of Electrical, Electronics and Computer
Engineering, University of the Punjab, Lahore, Pakistan (email: fainan.hanif@gmail.com).}	
\thanks{Y. Liu is with the Department of Electrical and Electronic Engineering, the
University of Hong Kong, Hong Kong (email: yuanwei@hku.hk).}
}
\maketitle
\begin{abstract}
Pinching-antenna systems (PASS) represent a promising customizable wireless access mechanism in high-frequency bands, enabled by dielectric waveguides and movable dielectric particles, called pinching antennas (PAs). In this work, we study optimal position allocation of PAs in PASS for multicasting in the downlink when a line-of-sight (LoS) link does not necessarily exist between all users and the PAs. The multicasting problem is solved by leveraging minorization-maximization (MM) principle to yield a provably convergent algorithm. In each run of the MM based procedure, we solve a convex surrogate problem using two methods called the candidate search method (CSM) and the bisection search method (BSM). With both BSM and CSM, we not only report superior performance of the multicasting PASS in non-LoS environments compared to conventional antenna systems (CAS), but also  determine that BSM yields better overall computational complexity when the number of users and PAs increases. For example, we report that when we have 8 PAs and 25 users, the execution time with the CSM is approximately 2.5 times that with the BSM.
\end{abstract}

\begin{IEEEkeywords}
Pinching antennas, multicasting, B5G, 6G.
\end{IEEEkeywords}

\section{Introduction}
To prevail over free space path losses, either the end user can be directly connected with a wire or drawn close to the base station (BS). Naturally, both solutions are impractical to implement in conventional wireless systems. Motivated by this, leaky waveguide based pinching-antenna systems (PASS) were first prototyped and discussed by NTT DOCOMO in the pioneering work by Fukuda \emph{at al.} \cite{Fukuda_DOCOMO}. The PASS are being envisaged to operate in high frequency sixth-generation (6G) networks to enhance communication zones and create them where they did not exist previously. Since the pioneering in depth study by Ding \emph{et al.} \cite{Ding_fundamental}, PASS have gained huge research interest. \par Several aspects of PASS have already been explored in research literature. For instance, in Xu \emph{et al.} \cite{Xu_dwnlink_PASS}, the authors solve the data rate maximization of  downlink pinching antennas (PAs) by proposing a relaxed problem and a two-stage optimization algorithm to solve it. The impact of line-of-sight (LoS) blockage is explored in \cite{Ding_PASS_block} by Ding \emph{et al.}, where it is concluded that LoS blockage is particularly useful in a multiuser scenario using the performance metric of outage probability. Likewise, Wang \emph{at al.} \cite{Wang_LoS_blck} look into the PA activation problem to maximize sum throughput when LoS blockages are considered. PASS aided multicasting of the same message to multiple users is investigated in \cite{Chen_PA_multicast} by Chen \emph{et al.} A cross-entropy framework is proposed in \cite{Chen_PA_multicast} without considering communication barriers in LoS between PAs and users. In the preprint by Shan \emph{et al.} \cite{Shan_multicast}, the authors examine multicasting with preconfigured PA positions with no blockages.\par In this paper, we study the impact of blockages on PA assisted physical layer multicasting system. By modeling the presence or absence of LoS using a Bernoulli random variable, we determine the positions of PAs along a waveguide such that the downlink mean signal-to-noise ratio (SNR) is maximized. Specifically, our contributions include: i) formulation of the main non-convex multicasting problem based on average downlink SNR; ii) rigorous development of a convex surrogate formulation for the multicasting problem to be used together with minorization-maximization (MM) framework; iii) two solutions of the proposed convex problem based on probing possible optimal decision variables and bisection search within each run of the MM approach; iv) comparison of the two methods from different perspectives, and v) numerical evaluation of the performance of the proposed solutions.\newline 
\emph{Notation:} Unless otherwise mentioned, $\|\cdot\|$ represents the Euclidean norm, $c^*$ is the conjugate of the complex number $c$, $\mathbb{E}_X[\cdot]$ denotes the expectation with respect to the random variable $X$, and $\text{Pr}\{E\}$ is the probability of the event $E$. The notation $\mathcal{O}(\cdot)$ represents the worst-case scenario of the algorithm's runtime, and $\nabla$ denotes the gradient operator. Lastly, both $v^{\star}$ and $v^\circ$ are used to designate the optimal values of a decision variable $v$.
\section{System Model and Problem Formulation}\label{SystemModel}
\begin{figure}
\centering
\includegraphics[width=1\columnwidth]{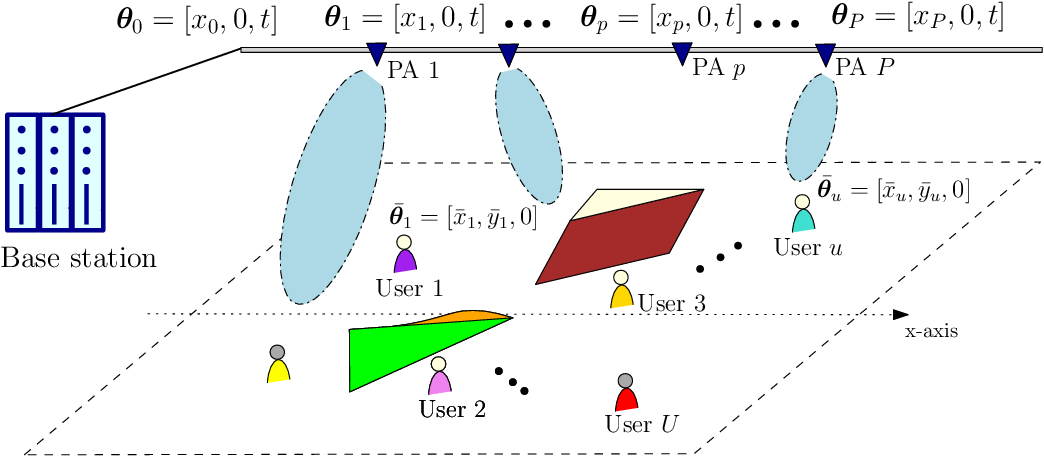}
\caption{Pinching antenna multicast transmission in the presence of blocking objects.} \label{fig:systemmodel}
\end{figure}
We consider a downlink system consisting of a waveguide equipped with $P$ PAs that communicate with $U$ single antenna users as shown in Fig. \ref{fig:systemmodel}. The waveguide is parallel to the horizontal axis located at a height $t$ above the service area such that the common scalar signal with unit power propagates through it. The position of the $p^{\text{th}}$ PA is $\boldsymbol{\boldsymbol{\theta}}_p=[x_p,0,t]$. It is assumed that the user $u$ is located at $\bar{\boldsymbol{\theta}}_u=[\bar{x}_u,\bar{y}_u,0]$. The coordinates of the waveguide feed point are given as $\boldsymbol{\theta}_0=[x_0,0,t]$. The phase shift in the signal from the feed point to the $p^{\text{th}}$ PA is $\phi_p=\frac{2\pi}{\lambda_g}\|\boldsymbol{\theta}_0-\boldsymbol{\theta}_p\|$, where $\lambda_g=\lambda/n_{\text{eff}}$ is the wavelength in the waveguide with an effective refractive
index of $n_{\text{eff}}$, and $\lambda$  is the free-space wavelength.\par In addition to the above, the LoS path between PA $p$ and a user $u$ is modeled by an indicator random variable $\mathcal{I}_{u,p}\in\{0,1\}$, where $1$ and $0$ indicate the presence and absence of LoS between $p$ and $u$, respectively. Hence, the channel between PA $p$ and user $u$ is given by
\begin{align}
\bar{h}_{u,p}=\mathcal{I}_{u,p}h_{u,p}\label{eq:overall_channel_u_p},
\end{align}
where 
\begin{align}
h_{u,p}=\frac{\eta^{1/2}}{\|\bar{\boldsymbol{\theta}}_u-\boldsymbol{\theta}_{p}\|}e^{-j\frac{2\pi}{\lambda}\|\bar{\boldsymbol{\theta}}_u-\boldsymbol{\theta}_{p}\|}e^{-j\frac{2\pi}{\lambda_g}\|\bar{\boldsymbol{\theta}}_0-\boldsymbol{\theta}_{p}\|}\label{eq:channel_u_p},
\end{align}
and where $\eta=\frac{c^2}{16\pi^2 f_c^2}$ represents the free space path loss coefficient where $c$ is the speed of light, and $f_c$ is the carrier frequency. Now, from \cite{Ding_PASS_block,Ding_EDMA}
\begin{align}
\text{Pr}\{\mathcal{I}_{u,p}=1\}=e^{-\alpha\|\bar{\boldsymbol{\theta}}_u-\boldsymbol{\theta}_p\|^2},\label{eq:prob_block}
\end{align}
where $\alpha$ is the LoS blockage parameter. The mean channel strength is therefore obtained as
\begin{align}
\mathbb{E}_{\mathcal{I}_{u,p}}[|\bar{h}_{u,p}|^2]&=\mathbb{E}_{\mathcal{I}_{u,p}}[|{h}_{u,p}|^2|\mathcal{I}_{u,p}=1]\text{Pr}\{\mathcal{I}_{u,p}=1\}\label{eq:avg_channel_power_1}\\& = |{h}_{u,p}|^2e^{-\alpha\|\bar{\boldsymbol{\theta}}_u-\boldsymbol{\theta}_p\|^2}.\label{eq:avg_channel_power_2}
\end{align}
We note that the mean received signal power at the $u^{\text{th}}$ user is given by
\begin{align}
\mathbb{E}[|\sum_{p=1}^P \mathcal{I}_{u,p}h_{u,p}|^2]&=\sum_{p=1}^P \mathbb{E}[|\mathcal{I}_{u,p}h_{u,p}|^2]+\nonumber\\&\sum_{p=1}^P\sum_{p^\prime\neq p}\mathbb{E}[\mathcal{I}_{u,p}\mathcal{I}_{u,p\prime}h_{u,p}h_{u,p^\prime}^*],\label{eq:desired_signal_power}
\end{align}
where the above is the joint expectation over the spatial phase of the channel and the LoS indicator random variable. The first expectation on the right side of \eqref{eq:desired_signal_power} is the same as \eqref{eq:avg_channel_power_2}, and noting the independence of LoS indicator random variable from the spatial phase of channels, observe that:
\begin{align}
\mathbb{E}[h_{u,p}h_{u,p^\prime}^*]=\frac{\eta}{\|\bar{\boldsymbol{\theta}}_u-\boldsymbol{\theta}_{p}\|\|\bar{\boldsymbol{\theta}}_u-\boldsymbol{\theta}_{p^\prime}\|}\mathbb{E}[e^{-j\Delta\Phi(\bar{\boldsymbol{\theta}}_u)}],\label{eq:mean_2_tot_signal_power}
\end{align}
where $\Delta\Phi(\bar{\boldsymbol{\theta}}_u)\triangleq\frac{2\pi}{\lambda}(\|\bar{\boldsymbol{\theta}}_u-\boldsymbol{\theta}_{p}\|-\|\bar{\boldsymbol{\theta}}_u-\boldsymbol{\theta}_{p^\prime}\|)+\frac{2\pi}{\lambda_g}(\|\bar{\boldsymbol{\theta}}_0-\boldsymbol{\theta}_{p}\|-\|\bar{\boldsymbol{\theta}}_0-\boldsymbol{\theta}_{p^\prime}\|)$. Even a slight change in the position of the user $u$ results in rapid fluctuations in the phase term $\Delta\Phi(\bar{\boldsymbol{\theta}}_u)$ which makes it behave as a uniform random variable in $[0,2\pi]$, resulting in zero mean value in \eqref{eq:mean_2_tot_signal_power}. Hence, with the total transmit power $P_{\text{TX}}$ uniformly allocated to all $P$ PAs \cite{Chen_PA_multicast}, the average SNR at $u$ is given by
\begin{align}
\text{SNR}_u=\frac{\rho\sum_{p=1}^P|{h}_{u,p}|^2e^{-\alpha\|\bar{\boldsymbol{\theta}}_u-\boldsymbol{\theta}_p\|^2}}{\sigma^2},\label{eq:avd_SNR}
\end{align}
where $\rho=\frac{\eta P_{\text{TX}}}{P}$ and we have assumed that the additive white Gaussian noise variance $\sigma_u^2=\sigma^2$ for all $u$. The main problem considered in the paper is stated as follows
\begin{subequations}\label{MProb}
\begin{align}
\mathop{{\rm maximize}}\limits _{\mathbf{x}} & \quad\min_{u=1,\ldots,U} \text{SNR}_u\\
{\rm subject~to} & \quad x_p\in [D_1,D_2],\: \forall p\label{eq:x_p_range}\\
 &\quad |x_p-x_q|\geq \delta,\:\forall p\neq q\label{eq:x_p_dist}
\end{align}
\end{subequations}
where $\mathbf{x}\triangleq[x_1,\ldots,x_P]$, the vector with positions of the $P$ PAs along the waveguide as its coordinates, is the decision variable, $D_1$ and $D_2$ signify the endpoints of the waveguide and $\delta$ is the minimum separation between any two PAs. We see that
\begin{align}
\text{SNR}_u=\rho^\prime\sum_{p=1}^P\frac{ e^{-\alpha[(x_p-\bar{x}_u)^2+\bar{y}_u^2+t^2]}}{(x_p-\bar{x}_u)^2+\bar{y}_u^2+t^2},\label{eq:SINR_def}
\end{align}
where $\rho^\prime=\rho/\sigma^2$. Let us define
\begin{align}
f_u(z)=\frac{e^{-\alpha[(z-\bar{x}_u)^2+c_u]}}{(z-\bar{x}_u)^2+c_u},\label{eq:def_f}
\end{align}
where $c_u=\bar{y}_u^2+t^2$. Thus, we have $\text{SNR}_u=\rho^\prime\sum_{p=1}^Pf_u(x_p)\triangleq S_u(\mathbf{x})$. Our original problem therefore becomes
\begin{align}
\mathop{{\rm maximize}}\limits _{\mathbf{x}\in \mathcal{F}} & \quad\min_{u=1,\ldots,U} S_u(\mathbf{x}) \label{Mprob_eq}
\end{align}
where the feasibility set $\mathcal{F}\triangleq \{x_p: x_p\in [D_1,D_2],\forall p, |x_p-x_q|\geq \delta,\forall p\neq q\}$. 
\section{Proposed Solution}
The main problem in \eqref{Mprob_eq} is non-convex with no straightforward solution. In order to solve \eqref{Mprob_eq}, we first develop some necessary mathematical background. Towards this end, consider the following property of $f_u(z)$.
\begin{thm}
\label{thm1:1stApp} The function $f_u(t)$ satisfies
\begin{align}
f_u(z)\geq f_u(z_0)+B_u(z_0)[(z-\bar{x}_u)^2-(z_0-\bar{x}_u)^2]\label{ineq:transformed_supp_hyperplane}
\end{align}
where $z_0$ is the given point and
\begin{align}
B_u(z_0)\triangleq \frac{-e^{-\alpha\{(z_0-\bar{x}_u)^2+c_u\}}[\alpha\{(z_0-\bar{x}_u)^2+c_u\}+1]}{\{(z_0-\bar{x}_u)^2+c_u\}^2}.\label{eq:const_B_u(t)}
\end{align}
\end{thm}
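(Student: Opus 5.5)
Let $w\triangleq (z-\bar{x}_u)^2+c_u$, so that $w\ge c_u>0$, and write $f_u(z)=g(w)$ with
\begin{align}
g(w)=\frac{e^{-\alpha w}}{w},\qquad w>0. \nonumber
\end{align}
Then \eqref{ineq:transformed_supp_hyperplane} is the statement that $g$ lies above its tangent line at $w_0=(z_0-\bar{x}_u)^2+c_u$. Indeed, $w-w_0=(z-\bar{x}_u)^2-(z_0-\bar{x}_u)^2$. Moreover, we expect $B_u(z_0)=g'(w_0)$, since differentiation gives
\begin{align}
g'(w)=-\frac{e^{-\alpha w}}{w}\left(\alpha+\frac{1}{w}\right)=-\frac{e^{-\alpha w}(\alpha w+1)}{w^2}, \nonumber
\end{align}
which at $w=w_0$ is exactly \eqref{eq:const_B_u(t)}. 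The plan is therefore to prove that $g$ is convex on $(0,\infty)$. The first-order characterization of convexity then gives
\begin{align}
g(w)\ge g(w_0)+g'(w_0)(w-w_0)\quad\text{for all } w,w_0>0, \nonumber
\end{align}
and substituting back for $w$ and $w_0$ yields the claim.

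For convexity we have two routes.

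\emph{Product route.} Write $g$ as the product of $e^{-\alpha w}$ and $1/w$. Both factors are positive, decreasing and convex on $(0,\infty)$, and since $\alpha\ge 0$ the exponential factor is convex as well. The product of two positive, convex, decreasing (hence similarly monotone) functions is convex. This follows from
\begin{align}
(ab)''=a''b+2a'b'+ab'', \nonumber
\end{align}
in which every term is nonnegative because $a',b'\le 0$ and $a'',b''\ge 0$.

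\emph{Direct route.} Compute $g''(w)$ and show it is nonnegative:
\begin{align}
g''(w)=e^{-\alpha w}\left(\frac{\alpha^2}{w}+\frac{2\alpha}{w^2}+\frac{2}{w^3}\right)\ge 0. \nonumber
\end{align}

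There is essentially no obstacle. The only care needed is to notice that convexity must be established in the variable $w$, that is, in $(z-\bar{x}_u)^2$, and not in $z$: the function $f_u$ itself is not convex in $z$ near $\bar{x}_u$. This is why the bound in \eqref{ineq:transformed_supp_hyperplane} is linear in $(z-\bar{x}_u)^2$ rather than in $z$. It is also worth checking that the domain $w\ge c_u>0$ lies inside the region where $g$ is convex, which holds because $c_u=\bar{y}_u^2+t^2>0$.
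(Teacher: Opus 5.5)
Your proposal is correct and follows the paper's proof: substitute $q=(z-\bar{x}_u)^2+c_u>0$, establish convexity of $\varphi(q)=e^{-\alpha q}/q$, and apply the supporting-hyperplane inequality at $q_0=(z_0-\bar{x}_u)^2+c_u$, where $\varphi'(q_0)=B_u(z_0)$. The only difference is in how convexity is checked: the paper uses log-convexity, $(\ln\varphi)''(q)=q^{-2}\ge 0$, while you compute $g''$ directly or use the product rule, and either check is valid.
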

\begin{IEEEproof}
Note that $q\triangleq(z-\bar{x}_u)^2+c_u$ is positive when $t>0$, and let $f_u(z)\triangleq \varphi(q)=\frac{e^{-\alpha q}}{q}$. The convexity of $\varphi(q)$ follows by noting that the second derivative of $\ln \varphi(q)$ is $q^{-2}$,which is non-negative. Hence, the supporting hyperplane inequality of convex functions implies 
\begin{align}
\varphi(q) \geq \varphi(q_0)+\varphi^\prime(q_0)(q-q_0)\label{ineq:supp_hyperplane}
\end{align}
where $q_0$ is a given point and $\varphi^\prime(q)=\frac{-e^{-\alpha q}(\alpha q + 1)}{q^2}$. It is observed that for all $q>0$ and $\alpha\geq 0$, the first derivative $\varphi^\prime(q)$ is negative. Using $q_0=(z_0-\bar{x}_u)^2+c_u$, the inequality in \eqref{ineq:supp_hyperplane} is transformed to the desired form in \eqref{ineq:transformed_supp_hyperplane}.
\end{IEEEproof}
\begin{remark}\label{remark:MM_framework_requirements}
Here we note that when $z=z_0$, the inequality in \eqref{ineq:transformed_supp_hyperplane} is satisfied with equality. Moreover, it can be shown straightforwardly that the derivative of both sides of \eqref{ineq:transformed_supp_hyperplane} is also the same at $z=z_0$. These two properties play a critical role in the convergence of iterative algorithms based on the MM framework \cite{Hunter_MM_tutorial}.
\end{remark}
Indeed, it is interesting to observe that, regardless of the properties of $f_u(z)$, the inequality in \eqref{ineq:transformed_supp_hyperplane} is a consequence of the convexity of $\varphi(q)$. Now from \eqref{ineq:transformed_supp_hyperplane}, it is deduced that
\begin{align}
&\rho^\prime\sum_{p=1}^P f_u(x_p)\geq  \rho^\prime\sum_{p=1}^P f_u(x_p^{(n)})\nonumber\\&+\rho^\prime\sum_{p=1}^P B_u(x_p^{(n)})[(x_p-\bar{x}_u)^2-(x_p^{(n)}-\bar{x}_u)^2]\label{ineq:sum_f}\\& \Leftrightarrow S_u(\mathbf{x}) \geq S_u(\mathbf{x}^{(n)})\nonumber\\&+\rho^\prime\sum_{p=1}^P B_u(x_p^{(n)})[(x_p-\bar{x}_u)^2-(x_p^{(n)}-\bar{x}_u)^2]\triangleq L_u(\mathbf{x}),\label{ineq:S_u}
\end{align}
where we have invoked \eqref{ineq:transformed_supp_hyperplane} at $z_0=x_p^{(n)}$, and the superscript $(n)$ corresponds to the iteration number of the algorithm we propose next. Furthermore, also note that $L_u(\mathbf{x})=\bar{L}_u(\mathbf{x}^{(n)})+\rho^\prime\sum_{p=1}^P B_u(x_p^{(n)})[(x_p-\bar{x}_u)^2]$, where $\bar{L}_u(\mathbf{x}^{(n)})\triangleq S_u(\mathbf{x}^{(n)})-\rho^\prime\sum_{p=1}^P B_u(x_p^{(n)})[(x_p^{(n)}-\bar{x}_u)^2]$.
Hence, we replace \eqref{Mprob_eq} with the following surrogate problem
\begin{align}
\mathop{{\rm maximize}}\limits _{\mathbf{x}\in \mathcal{F}} & \quad\min_{u=1,\ldots,U} L_u(\mathbf{x}). \label{Mprob_eq_lb}
\end{align}
Specifically, in the $(n+1)^{\text{st}}$ run of the MM procedure, developed as \textbf{Algorithm \ref{alg:multicast_PA_blck}}, we solve \eqref{Mprob_eq_lb}. To do so, we exploit the properties of the upper bound $L_u(\mathbf{x})$ and develop two solutions of \eqref{Mprob_eq_lb} in the next sections.
\subsection{Candidate Search Method (CSM) for Maximization of Lower Envelope of Inverted Parabolas}\label{sec:parabolas}
From the structure of $L_u(\mathbf{x})$, we see that it is separable in $x_p$. Therefore, we perform a coordinate wise maximization of the objective in \eqref{Mprob_eq_lb}. To find the position of $p^{\text{th}}$ PA, define $\tilde{L}_u(\mathbf{x}^{(n)})\triangleq \bar{L}_u(\mathbf{x}^{(n)})+\rho^\prime\sum_{q\neq p} B_u(x_q^{(n)})[(x_q-\bar{x}_u)^2$ for all $u$. Hence, $L_u(\mathbf{x})=\tilde{L}_u(\mathbf{x}^{(n)})+\rho^\prime B_u(x_p^{(n)})(x_p-\bar{x}_u)^2$. As $B_u(x_p^{(n)})$ is negative, we see that \eqref{Mprob_eq_lb} represents the maximization of the lower envelope of inverted parabolas over $\mathcal{F}$. In order to maximize the objective in \eqref{Mprob_eq_lb} for each $x_p$, we determine the feasible region along the length of the waveguide inside which the $p^{\text{th}}$ PA can be placed such that no two PAs are within a distance of $\delta$ from each other. Since we perform coordinate wise maximization, the locations of all $x_q$ where $q\neq p$ are known, and the constraints $|x_p-x_q|\geq \delta$ need to be satisfied for $p\neq q$. Therefore, the feasible intervals for $x_p$ are obtained by subtracting from $[D_1,D_2]$ the union of all intervals of the form $(x_q-\delta,x_q+\delta)$ for all $q\neq p$. As a result, we ensure no $x_q$ is within a distance $\delta$ from the $x_p$ to be optimized. The feasible set for $x_p$ thus becomes $\mathcal{F}_p\in [D_1,D_2]\setminus \bigcup_{q\neq p} (x_q-\delta,x_q+\delta)$, which can be a disjoint union of intervals.\par Once the feasible intervals are known, for each interval, the set of candidate points, $\mathcal{C}_m$, where the minimum of $L_u({\mathbf{x}})$ over $u$ is maximized, is determined. To determine $\mathcal{C}_m$ for downward pointing and intersecting parabolas, led by intuition and the extreme value theorem, three possibilities are investigated: (i) end points of the interval, (ii) vertices such that $x_p=\bar{x}_u$ of all $U$ parabolas subject to they belong to the interval, and (iii) intersection points where parabolas pairwise intersect. Point (iii) implies that the candidate optimal points satisfy $L_u(\mathbf{x})=L_w(\mathbf{x})$ for all $w\neq u$, each of which is a quadratic equation whose real solution which belongs to the given interval is taken into account. For each interval that constitutes $\mathcal{F}_p$, we take those points from $\mathcal{C}_m$ where the minimum of $L_u(\mathbf{x})$ over all $U$ users is attained. More specifically, let the $R$ potential locations of $x_p$ be denoted by $x_{\text{cand}}^1,\ldots,x_{\text{cand}}^R$. For each candidate $x_{\text{cand}}^r$, we evaluate $h_u(x_{\text{cand}}^r)\triangleq \tilde{L}_u(\mathbf{x}^{(n)})+\rho^\prime B_u(x_p^{(n)})(x_{\text{cand}}^r-\bar{x}_u)^2$ for all $u$. The minimum of $h_u(x_{\text{cand}}^r)$ is stored in $H(x_{\text{cand}}^m)$. The same procedure is carried out for all intervals in $\mathcal{F}_p$. Finally, in the $(n+1)^{\text{st}}$ iteration, $x_p^{(n+1)}$ is chosen as that candidate solution which maximizes $H(x_{\text{cand}}^m)$ over all intervals. This method is outlined as \texttt{Procedure A} in \textbf{Algorithm \ref{alg:proced_A}}.
\subsection{Bisection Search Method (BSM) for Coordinate Ascent}
It is seen from \eqref{ineq:S_u} that $S_u(\mathbf{x})$ is lower bounded by $\bar{L}_u(\mathbf{x}^{(n)})+\rho^\prime\sum_{p=1}^P B_u(x_p^{(n)})[(x_p-\bar{x}_u)^2]$. Recall that within the MM framework, we solve \eqref{Mprob_eq_lb} using coordinate wise maximization. To obtain $x_p^{(n+1)}$, we consider the following convex optimization problem:
\begin{align}
\mathop{{\rm maximize}}\limits _{x_p\in \mathcal{F},s} & \: s \text{ s. t. } \tilde{L}_u(\mathbf{x}^{(n)})+\rho^\prime B_u(x_p^{(n)})(x_p-\bar{x}_u)^2\geq s,\forall u\label{Mprob_eq_coordinate_wise}
\end{align}
where $s$ is an analysis variable used to obtain the epigraph form in \eqref{Mprob_eq_coordinate_wise} and $\tilde{L}_u(\mathbf{x}^{(n)})=\bar{L}_u(\mathbf{x}^{(n)})+\rho^\prime\sum_{q\neq p} B_u(x_q^{(n)})[(x_q-\bar{x}_u)^2$. The constraint corresponding to user $u$, i.e.,
\begin{align}
\tilde{L}_u(\mathbf{x}^{(n)})+\rho^\prime B_u(x_p^{(n)})(x_p-\bar{x}_u)^2\geq s\label{ineq:constraint_u}
\end{align}
implies that 
\begin{align}
|x_p-\bar{x}_u| \leq \sqrt{\frac{\tilde{L}_u(\mathbf{x}^{(n)})-s}{-\rho^{\prime} B_u(x_p^{(n)})}}\triangleq Q_u(s). \label{ineq:constraint_u_eq1}
\end{align}
In turn, the above implies that the feasible set for each user $u$, corresponding to $x_p$ is given by
\begin{align}
\mathcal{F}_p^u\triangleq [\bar{x}_u-Q_u(s),\bar{x}_u+Q_u(s)].\label{eq:feasible_set_u}
\end{align}
It follows from \eqref{eq:feasible_set_u} that for all $u$, the validity of \eqref{ineq:constraint_u} for all users requires $x_p\in \bigcap_{u=1}^U \mathcal{F}_p^u$. This is the same as mandating that $x_p\geq \bar{x}_u-Q_u(s),\forall u$ or $x_p\geq \max_u(\bar{x}_u-Q_u(s))\triangleq \mathtt{l}^\prime(s)$. Similarly, $x_p\leq \min_u(\bar{x}_u+Q_u(s))\triangleq \mathtt{u}^\prime(s)$. Hence, $x_p\in [\mathtt{l}^\prime(s),\mathtt{u}^\prime(s)]$, which is nonempty when $\mathtt{l}^\prime(s)\leq \mathtt{u}^\prime(s)$. The feasible interval $\mathcal{F}_p$ for $x_p$, as discussed and determined in Sec. \ref{sec:parabolas}, can be written as $\mathcal{F}_p=\bigcup_{m=1}^M[\mathtt{a}_m,\mathtt{b}_m]$. Hence, overall the position of the $p^{\text{th}}$ PA satisfies $x_p\in \mathcal{F}_p\cap [\mathtt{l}^\prime(s),\mathtt{u}^\prime(s)]$, which translates to $x_p\in[\max(\mathtt{l}^\prime(s),\mathtt{a}_m),\min(\mathtt{u}^\prime(s),\mathtt{b}_m)]$.\par In order to determine $s$ in \eqref{Mprob_eq_coordinate_wise} using bisection search, the maximum $s_{\text{max}}$ and minimum $s_{\text{min}}$ values of $s$ need to be known. From \eqref{ineq:constraint_u}, due to negative $B_u(x_p^{(n)})$, it follows that $s\leq \tilde{L}_u(\mathbf{x}^{(n)}),\forall u$ which results in $s_{\text{max}}=\min_u \tilde{L}_u(\mathbf{x}^{(n)})$. The parameter $s_{\text{min}}$ can be determined as $s_{\text{min}}=\max_{x_p\in\mathcal{C}_{\text{cand}}}\min_u \{\tilde{L}_u(\mathbf{x}^{(n)})+\rho^\prime B_u(x_p^{(n)})(x_p-\bar{x}_u)^2\}$, where $\mathcal{C}_{\text{cand}}$ represents the set of candidate points determined in Sec. \ref{sec:parabolas}. It is important to emphasize that in $s_{\text{min}}$, we do not need to check all three categories of candidate points exhaustively, and finding $s_{\text{min}}$ for a subset of $\mathcal{C}_m$ is sufficient. We summarize this bisection search based per coordinate maximization approach in \texttt{Procedure B} as \textbf{Algorithm \ref{alg:proced_B}}. The optimal position $x_p^{\circ}$ is determined as the midpoint of the interval $[\max(\mathtt{l}^\prime(s),\mathtt{a}_m),\min(\mathtt{u}^\prime(s),\mathtt{b}_m)]$ at $s_{\text{min}}$, which, by virtue of the construction of \textbf{Algorithm \ref{alg:proced_B}}, is the highest value of $s$ with a proven existence of feasible $x_p$.
\begin{algorithm}
\caption{Pinching antennas enabled blockage aware multicasting algorithm}
\label{alg:multicast_PA_blck}
\KwIn{$D_1,D_2,\delta,c,f_c,\bar{\boldsymbol{\theta}}_0,\bar{\boldsymbol{\theta}}_u, \forall u, t, (\bar{x}_u,\bar{y}_u),\forall u$}
\KwOut{ $x_p^\star,\forall p$}
\For{N times}{
initialize with $x_p^{(0)},\forall p$\;
$n\leftarrow0$\;
\Repeat{convergence }{
\texttt{Procedure A} or \texttt{Procedure B} to get $x_p^{\circ}$\;
Update: $x_p^{(n+1)}\leftarrow x_{p}^{\circ},\forall p$\;
$n\leftarrow n+1$}}
\end{algorithm}
\begin{algorithm}
\caption{\texttt{Procedure A}}
\label{alg:proced_A}
\KwIn{$\mathcal{F}_p$}
\KwOut{optimal $x_p^{\circ}$ for the given data}
For each interval $I_m\in\mathcal{F}_p$ find $\mathcal{C}_m$\;
For each $x_{\text{cand}}^r\in \mathcal{C}_m$ find $H(x_{\text{cand}}^m)=\min_u h_u(x_{\text{cand}}^r)$\;
$x_p^m=\argmax H(x_{\text{cand}}^m)$ and $\bar{H}_m=H(x_p^m)$\;
$m^\star=\argmax \bar{H}_m$ set $x_p^{\circ}=x_p^{m^\star}$
\end{algorithm}
\begin{algorithm}
\caption{\texttt{Procedure B}}
\label{alg:proced_B}
\KwIn{$\mathcal{F}_p,s_{\text{min}},s_{\text{max}}$}
\KwOut{optimal $s^{\circ}$ and $x_p^{\circ}$ of \eqref{Mprob_eq_coordinate_wise} }
\While{$s_{\max}-s_{\min}>\tau$}{
$s_{\text{mid}}=\frac{s_{\text{max}}+s_{\text{min}}}{2}$\;
\eIf{$[\mathtt{a}_m,\mathtt{b}_m]\cap [\mathtt{l}^\prime(s_{\mathrm{mid}}),\mathtt{u}^\prime(s_{\mathrm{mid}})]\neq\phi$}
{set $s_{\text{min}}=s_{\text{mid}}$\;}{set $s_{\text{max}}=s_{\text{mid}}$\;}
}
$s^{\circ}=s_{\text{mid}}$\;$x_p^{\circ}$: midpoint of $[\max(\mathtt{l}^\prime(s_{\text{min}}),\mathtt{a}_m),\min(\mathtt{u}^\prime(s_{\text{min}}),\mathtt{b}_m)]$
\end{algorithm}
\subsection{MM Framework and Complexity Comparison}\label{MM-complexity}
Once the optimal $\mathbf{x}$ in \eqref{Mprob_eq_lb} has been determined using either \texttt{Procedure A} or \texttt{Procedure B}, we use it to update the positions of all PAs in the $(n+1)^{\text{st}}$ iteration of the MM iterative procedure, as shown in \textbf{Algorithm \ref{alg:multicast_PA_blck}}. As noted in \textbf{Remark \ref{remark:MM_framework_requirements}}, the surrogate objective in \eqref{Mprob_eq_lb} satisfies $L_u(\mathbf{x}^{(n)})=S_u(\mathbf{x}^{(n)})$ and equality of $\nabla L_u(\mathbf{x})$ with $\nabla S_u(\mathbf{x})$ at $\mathbf{x}^{(n)}$, which are sufficient to establish the convergence of the MM based procedure to the Karush-Kuhn-Tucker (KKT) conditions of \eqref{Mprob_eq}. Due to similarity in arguments as given in \cite{Hunter_MM_tutorial,Marks_inner_apprx}, and space reasons, the details have not been included. \par We provide a comparison of the worst-case per-iteration complexity of both methods within the MM loop. In \texttt{Procedure A}, for each interval, we have $\mathcal{O}(U^2)$ pairwise comparisons and $\mathcal{O}(U)$ evaluations, resulting in $\mathcal{O}(U^3)$ work per interval. With $M$ intervals and $P$ positions to be updated, the per-iteration complexity of the MM algorithm thus becomes $\mathcal{O}(MPU^3)$. In \texttt{Procedure B}, for each interval there are $\mathcal{O}(\log_2(\frac{s_{\text{max}}-s_{\text{min}}}{2}))$ bisection iterations each of which involves $\mathcal{O}(U)$ feasibility checks. Hence, for the given $P$ PAs and $M$ intervals, the per-iteration worst-case complexity of the MM method becomes $\mathcal{O}(MPU\log_2(\frac{s_{\text{max}}-s_{\text{min}}}{2}))$ when \textbf{Algorithm \ref{alg:proced_B}} is used. Therefore, clearly, for large number of users $U$, \texttt{Procedure B} is the preferred choice for implementation. Since the MM procedure is sensitive to the starting point, we run the iterative algorithm $N$ times with different initial points to determine the best objective as shown in \textbf{Algorithm \ref{alg:multicast_PA_blck}}.
\section{Numerical Results}
\begin{figure}[tb]
		\centering
\includegraphics[width=1\columnwidth]{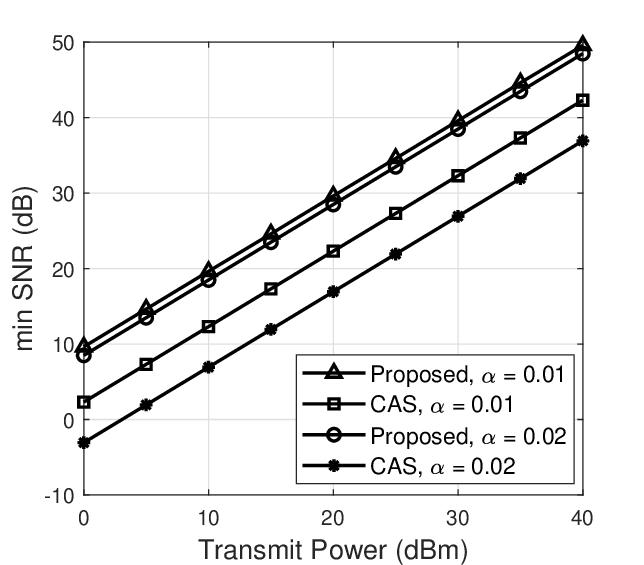}
		\caption{Variation of the objective min. SNR with transmit power for different $\alpha$. We fix $U=5$ and $P=5$.}
		\label{fig: PowVsminSNR}
	\end{figure}
    \begin{figure}[tb]
		\centering
\includegraphics[width=1\columnwidth]{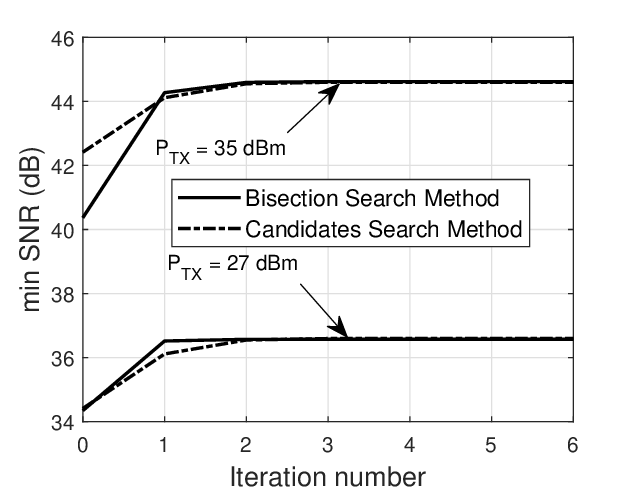}
		\caption{Convergence behaviour of \textbf{Algorithm \ref{alg:multicast_PA_blck}} for different $P_{\text{TX}}$. The parameters $\alpha=0.01,P=5$ and $U=5$ are fixed.}
		\label{fig: convergence}
	\end{figure}
    \begin{figure}[tb]
		\centering
\includegraphics[width=1\columnwidth]{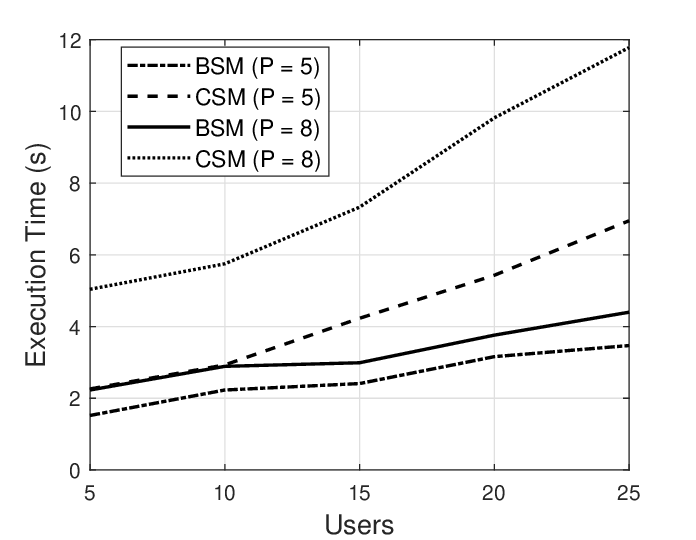}
		\caption{Variation of execution time of \textbf{Algorithm \ref{alg:multicast_PA_blck}} for different $U$. The parameters $U=5,P_{\text{TX}}=40$ dBm and $\alpha=0.01$ are fixed.}
		\label{fig: UsersVsExecTime}
	\end{figure}
In order to numerically assess the proposed framework, we consider a waveguide consisting of $5$ PAs, placed at a height $t=3$ m above a rectangular service area whose width is $10$ m and length is four times its width, such that $5$ single antenna users are uniformly distributed within this region. The transmission frequency is set as 28 GHz, the noise power is $-90$ dBm, $\delta$ is taken as $\lambda/2$, and unless otherwise mentioned $D_1$ and $D_2$ are taken as $-10$ m and $10$ m, respectively. For conventional antenna systems (CAS), we assume the antennas are fixed at the center of the service area with an equal spacing of $\lambda/2$.\par In Fig. \ref{fig: PowVsminSNR}, we study the comparison of the objective function i.e., minimum SNR for both the proposed and the CAS methods as a function $P_{\text{TX}}$ for a couple of LoS blockage parameter, $\alpha$, values. The gap between the proposed and the CAS values is considerably large for both $\alpha$. However, predictably, the minimum SNR increases with $P_{\text{TX}}$ for both types of transmit antennas. It is also seen that a decreased LoS blockage parameter results in a higher achievable worst-case SNR. \par Fig. \ref{fig: convergence} shows the convergence of the main MM based procedure from both CSM and BSM approaches. Within three to four iterations, both CSM and BSM reach the same value of SNR for the transmit powers shown in Fig. \ref{fig: convergence}. \par In Fig. \ref{fig: UsersVsExecTime}, we further compare both BSM and CSM from the perspective of their contribution in execution times of the overall \textbf{Algorithm \ref{alg:multicast_PA_blck}}, as a function of users $U$ for different values of $P$. The trend of graphs shown in Fig. \ref{fig: UsersVsExecTime} corroborate the complexity estimates provided in Sec. \ref{MM-complexity}. Specifically, BSM performs better than the CSM as $U$ increases. In fact, the difference between the execution times of BSM and CSM is significantly enhanced when both $P$ and $U$ have higher values.
\section{Conclusion}
In this paper, we have investigated the design and performance of a downlink multicasting PASS when users do not necessarily have a LoS link between them and PAs. The main optimization problem is solved using MM framework for which novel bounds on the objective function are established. In each iteration of the MM algorithm, we solve the proposed convex surrogate problem using CSM and BSM techniques. It is established in the numerical results section that our solution is not only superior to the CAS method but also possesses favorable complexity properties for practical implementation purposes.
\label{Conc} 

\balance
%
\bibliographystyle{IEEEtran}
\bibliography{IEEEabrv,PA_multicast_block}
\end{document}